\documentclass{ifacconf}

\usepackage{graphicx}      
\usepackage{natbib}        

\usepackage{subfig}
\usepackage[T1]{fontenc}

\usepackage{xcolor}

\usepackage{amsmath}
\usepackage{amssymb}
\usepackage{dirtytalk}
\usepackage{url}

\usepackage{amssymb}
\makeatletter
\newcommand*{\transpose}{%
	{\mathpalette\@transpose{}}%
}
\newcommand*{\@transpose}[2]{%
	\small \raisebox{1.37\depth}{$\m@th#1\intercal$}%
}
\makeatother

\newcommand{\R}{\mathbb{R}}
\newcommand{\Complex}{\mathbb{C}}
\newcommand{\abs}[1]{\left| #1 \right|}
\newcommand{\norm}[1]{\left| #1 \right|}
\newcommand{\norminfty}[1]{\norm{#1}_{\infty}}

\newcommand{\normltwo}[1]{\norm{#1}}

\newcommand{\normLtwo}[1]{\norm{#1}_{L^{2}}}

\newcommand{\normHtwo}[1]{\norm{#1}_{H^{2}}}

\newcommand{\X}{\mathcal{X}}

\newcommand{\T}{\mathcal{T}}
\newcommand{\Tinv}{\mathcal{T}^*}
\newcommand{\tcut}{t_c}
\newcommand{\param}{\theta}
\newcommand{\invparam}{\theta}
\newcommand{\Tparam}{\T_\param}
\newcommand{\Tinvparam}{\Tinv_\invparam}
\newcommand{\noise}{\epsilon}
\newcommand{\omegac}{\omega_c}
\newcommand{\Dparam}{\omegac}

\newtheorem{assumption}{Assumption}

\newtheorem{theorem}{Theorem}
\newtheorem{remark}{Remark}
\newtheorem{proposition}{Proposition}

\def\O{\mathcal{O}}

\def\X{\mathcal{X}}


\begin{document}
\begin{frontmatter}
	
\title{Towards Gain Tuning for Numerical KKL Observers\thanksref{footnoteinfo}} 

\thanks[footnoteinfo]{This work was supported by Ansys Inc.}

\author[CAS,DSME,Ansys]{Mona Buisson-Fenet} 
\author[DSME]{Lukas Bahr} 
\author[Ansys]{Valery Morgenthaler} 
\author[CAS]{Florent Di Meglio}

\address[CAS]{Centre Automatique et Systèmes (CAS), Mines Paris -- PSL, Paris, France (e-mail: mona.buisson@minesparis.psl.eu)}
\address[DSME]{Institute for Data Science in Mechanical Engineering, RWTH Aachen University, Aachen, Germany}
\address[Ansys]{Ansys Research Team, Ansys France, Villeurbanne, France}

\begin{abstract}                
This paper presents a first step towards tuning observers for general nonlinear systems.
Relying on recent results around Kazantzis-Kravaris/Luenberger (KKL) observers, we propose an empirical criterion to guide the calibration of the observer, by trading off transient performance and sensitivity to measurement noise.
We parametrize the gain matrix and evaluate this criterion over a family of observers for different parameter values.
We then use neural networks to learn the mapping between the observer and the nonlinear system, and present a novel method to sample the state-space efficiently for nonlinear regression. 
We illustrate the merits of this approach in numerical simulations. 
\end{abstract}

\begin{keyword}
	Nonlinear observers and filter design; Continuous time system estimation; Machine learning; Estimation and filtering; Observer design.
\end{keyword}
	
\end{frontmatter}

\section{Introduction}

In this paper, we propose a numerical method to calibrate Kazantis-Kravaris-Luenberger (KKL) observers. 
The original design of Luenberger observers for linear systems can be found in~\citet{Luenberger1966}. It consists in finding a linear mapping between the system dynamics and a linear filter of the measurement. Under appropriate observability assumptions and filter design, the Sylvester equation satisfied by the mapping has a unique injective solution. Its left-inverse, along with the filter, can be used to compute a convergent state estimate. 

This design encompasses important degrees of freedom: the matrices defining the filter or, equivalently, the poles and zeros of the filter transfer function. To study their effect on state estimation performance, one must consider the effect of the mapping, which modifies the response, among others to measurement noise. For autonomous linear systems, the problem of tuning these degrees of freedom is essentially solved by the stationary Kalman Filter~\citep{Kalman1961}. Rather than directly assigning closed-loop eigenvalues, one can weigh the relative confidence in the measurement and the dynamic model and find the observer gains that are optimal for the metric defined by these weights.

The extension of these approaches to nonlinear systems is nontrivial.
Indeed, there are few generic nonlinear observer designs; a review of these can be found in~\citet{Pauline_observer_design_nonlin_systs,Pauline_survey_observers}. Among the most commonly used are the High-Gain Observer (HGO)~\citep{Bornard1991, Khalil2014} and the Extended Kalman Filter (EKF)~\citep{Gelb1974}. The EKF consists in linearizing the observer dynamics around the current estimate to compute the optimal gain depending on chosen weights, akin to the linear case. There are, however, only local convergence guarantees~\citep{Krener2003}. Conversely, the HGO relies on a change of variables to bring the system into canonical form, and high gains to \say{dominate} the Lipschitz constant of the nonlinearity. The stability guarantees come at the price of possibly poor transient performance (the so-called \say{peaking} phenomenon~\citep{maggiore2003separation}) and high sensitivity to noise. While recent contributions aim at reducing these detrimental features thanks, e.g., to dynamic extensions~\citep{Astolfi2018}, the question of gain tuning and performance criteria remains open. In particular, in~\citet{Astolfi2018}, the sensitivity to noise is examined a posteriori through numerous simulations.

In this paper, we develop a tuning methodology for Kazantzis-Kravaris/Luenberger (KKL) observers that does not rely on extensive tests, inspired by the Kalman filter or $H_\infty$ control. The KKL design~\citep{Kazantzis1998, Praly_existence_KKL_observer} extends the results of~\citet{Luenberger1966} to nonlinear systems. It maps the system dynamics to a stable linear filter of the measured output, called the observer dynamics. The existence and injectivity of this mapping are guaranteed by mild observability conditions, which makes this design relatively generic. The contraction properties of the observer dynamics ensure convergence of the state estimates. The main challenge consists in computing said mapping and its left inverse, along with tuning the free parameters of the observer.

In~\citet{Ramos2020}, a method is proposed to approximate the mapping by performing nonlinear regression on datasets generated from trajectories of the system and the observer. Given fixed observer parameters, a neural network approximates the considered mapping, which is then used to compute state estimates from observer values. 

In this paper, we build on the approach of~\citet{Ramos2020} and propose a first step towards calibration of the observer. 
Our main contribution is a procedure to select the gain matrix using a tuning criterion that, in some sense, trades off the transient performance against the sensitivity to measurement noise.
We start by setting this matrix based on a pre-defined filter, parametrized by its cut-off frequency.
We then approximate the KKL mapping for different values of this parameter using neural networks, either independently or by learning the mapping as a function of the parameter.
This approximation is enabled by appropriately sampling the state-space, improved upon~\citet{Ramos2020}.
Computing the proposed criterion for all values of the parametrized gain matrix leads to an optimal calibration for the observer, in the sense of the proposed empirical criterion.
Numerical simulations illustrate the approach. 

The paper is organized as follows. In Sec.~\ref{sec:KKL_observers}, we recall the main idea behind KKL observer design. 
In Sec.~\ref{sec:gain_tuning_criterion}, we propose an empirical gain tuning criterion, then detail our numerical approach for state-space sampling, observer parametrization and nonlinear regression in Sec.~\ref{sec:numerical_methods}. Finally, we illustrate the merits of the approach through numerical simulations in Sec.~\ref{sec:results}, before concluding in Sec.~\ref{sec:conclu}.

\section{KKL observers}
\label{sec:KKL_observers}

Consider the autonomous nonlinear dynamical system
\begin{align}
	\begin{aligned}
		\dot x &= f(x)   \\
		y &= h(x) 
	\end{aligned}\label{eq:main_dynamics}
\end{align}
where $x \in \R^{d_x}$ is the state, $y \in \R^{d_y}$ is the measured output, $f$ is a continuously differentiable function $(C^1)$ and $h$ is a continuous function. The goal of observer design is to compute an estimate of the state~$x(t)$ from the knowledge of the past values of the output $y(s)$, $0 \leq s \leq t$. To ensure the feasibility of this task, KKL observers rely on the following two assumptions.
\begin{assumption}
	\label{ass_bounded}
	There exists a compact set $\X$ such that for any solution of interest $x$ to \eqref{eq:main_dynamics},  $x(t)\in \X$  for all $t\geq 0$.
\end{assumption}
\begin{assumption}
	\label{ass_detect}
	There exists an open bounded set $\O$ containing $\X$ such that \eqref{eq:main_dynamics} is \textit{backward $\O$-distinguishable} on $\X$, namely for any trajectories $x_a$ and $x_b$ of \eqref{eq:main_dynamics} such that $(x_a(0),x_b(0))\in \X\times \X$ and $x_a(0)\neq x_b(0)$, there exists $t < 0 $ such that
	$$
	h(x_a(t))\neq h(x_b(t)) 
	$$
	and $(x_a(\tau),x_b(\tau))\in \O\times \O$ for all $\tau\in[t,0]$.
	In other words, their respective outputs become different in backward finite time before leaving $\O$.
\end{assumption}
We now recall the following Theorem from~\citet{Praly_existence_KKL_observer} showing the existence of a KKL observer. 
\begin{theorem}[\citet{Praly_existence_KKL_observer}]
	\label{theo_auto}
	Suppose Assumptions \ref{ass_bounded} and \ref{ass_detect} hold. Define $d_z = d_y(d_x+1)$. Then,  there exists $\ell >0$ and a set $S$ of zero measure in $\mathbb{C}^{d_z}$ such that for any diagonalizable matrix $D\in \Complex^{d_z\times d_z}$ with eigenvalues $(\lambda_1,\ldots,\lambda_{d_z})$ in $\mathbb{C}^{d_z}\setminus S$ with $\Re \lambda_i <-\ell$, and any~$F\in \Complex^{d_z\times d_x}$ such that $(D,F)$ is controllable, there exists a continuous injective mapping~$\T : \R^{d_x} \rightarrow \Complex^{d_z}$ that satisfies the following equation on $\X$
	\begin{align}
		\frac{\partial  \T}{\partial x}(x)f(x) = D\mathcal  \T(x)+Fh(x),\label{eq:pdeT}
	\end{align}
	and its continuous pseudo-inverse~$ \Tinv: \Complex^{d_z} \rightarrow \R^{d_x}$ such that the trajectories of~\eqref{eq:main_dynamics} remaining in $\X$ and any trajectory of
	\begin{align}
		\dot z &= D z + F y
		\label{eq:dynz}
	\end{align}
	satisfy
	\begin{align}
		\left|z(t) -  \T(x(t)) \right| \leq M \left|z(0) -  \T(x(0)) \right|e^{-\lambda_{\min} t} \label{eq:expBound}
	\end{align}
	for some~$ M > 0$ and with~
	\begin{align}
		\lambda_{\min} = \min \left\{|\Re \lambda_1|,\ldots, |\Re\lambda_{d_z}|\right\} \label{eq:lambdaMin} .
	\end{align}
	Due to the uniform continuity of $\Tinv$, this yields:
	\begin{align}
		\lim_{t \to +\infty}\left|  \Tinv(z(t)) - x(t) \right| = 0.
	\end{align}
\end{theorem}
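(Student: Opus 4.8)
The plan is to follow the classical construction behind KKL existence results. The key observation is that, along any solution $x$ of \eqref{eq:main_dynamics}, the PDE \eqref{eq:pdeT} is equivalent to requiring that $t\mapsto\T(x(t))$ solve the linear driven ODE $\dot\zeta = D\zeta + Fh(x(t))$. Since every eigenvalue satisfies $\Re\lambda_i<-\ell<0$, this ODE has a unique bounded solution obtained by integrating from $-\infty$, which singles out the explicit candidate
\begin{equation*}
	\T(x) = \int_{-\infty}^{0} e^{-D\tau}\, F\, h\bigl(\phi(\tau;x)\bigr)\,d\tau,
\end{equation*}
where $\phi(\cdot\,;x)$ denotes the flow of $f$ through $x$ at time $0$. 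To make this well defined for all $x$, I would first prolong $f$ (and $h$) outside $\O$ so that $f$ becomes globally Lipschitz, bounded and complete without altering trajectories on $\X$; then Assumption~\ref{ass_bounded} together with the Hurwitz property of $D$ gives absolute convergence of the integral, while continuous dependence of $\phi(\tau;x)$ on $x$ yields continuity of $\T$. Differentiating the integral along a trajectory and using the flow property $\phi(\tau;x(t))=x(t+\tau)$ then verifies \eqref{eq:pdeT} on $\X$.

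The heart of the argument, and the step I expect to be the main obstacle, is the \emph{injectivity} of $\T$ for a generic choice of spectrum; this is where the dimension $d_z=d_y(d_x+1)$ and the exceptional zero-measure set $S$ enter. The strategy is by contradiction: suppose $\T(x_a)=\T(x_b)$ with $x_a\neq x_b$. Assumption~\ref{ass_detect} furnishes a backward time at which the two outputs differ while both trajectories remain in $\O$, so the continuous map $\tau\mapsto h(\phi(\tau;x_a))-h(\phi(\tau;x_b))$ is not identically zero on $(-\infty,0]$. Writing $\T(x_a)-\T(x_b)$ as the integral of $e^{-D\tau}F$ against this discrepancy and projecting onto the eigenbasis of $D$ produces, for each $\lambda_i$, an equality of the form $\int_{-\infty}^{0} e^{-\lambda_i\tau}g(\tau)\,d\tau=0$ with $g$ continuous and not identically zero. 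A Laplace-transform/analyticity argument then shows that the tuples $(\lambda_1,\dots,\lambda_{d_z})$ for which all these equalities hold simultaneously form the zero set of a nontrivial analytic function, hence a set $S$ of zero Lebesgue measure; the count $d_z=d_y(d_x+1)$ is precisely what supplies enough independent spectral constraints for this transversality argument to close uniformly over the compact $\X$. This delicate, measure-theoretic step is the one I would treat most carefully, essentially importing the corresponding lemma from~\cite{Praly_existence_KKL_observer}.

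Granting injectivity, the remaining steps are routine. Because $\T$ is continuous and injective on the compact set $\X$, it is a homeomorphism onto $\T(\X)$; I would extend its inverse continuously from $\T(\X)$ to all of $\Complex^{d_z}$ by a Tietze/McShane-type extension, obtaining a uniformly continuous pseudo-inverse $\Tinv$ with $\Tinv\circ\T=\mathrm{id}$ on $\X$. For the error estimate, set $e=z-\T(x)$; subtracting \eqref{eq:pdeT} evaluated along $x$ (with $y=h(x)$) from \eqref{eq:dynz} cancels the forcing term and leaves the homogeneous dynamics $\dot e = De$, whence $e(t)=e^{Dt}e(0)$. Diagonalizing $D=P\Lambda P^{-1}$ and using $\Re\lambda_i\le-\lambda_{\min}$ for $t\ge0$ gives \eqref{eq:expBound} with $M=\norm{P}\,\norm{P^{-1}}$ and $\lambda_{\min}$ as in \eqref{eq:lambdaMin}. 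Finally, since $x(t)=\Tinv(\T(x(t)))$, the uniform continuity of $\Tinv$ combined with $\abs{z(t)-\T(x(t))}\to0$ forces $\abs{x(t)-\Tinv(z(t))}\to0$, which is the asserted convergence.
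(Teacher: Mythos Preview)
The paper does not prove Theorem~\ref{theo_auto}: it is stated as a result \emph{derived from}~\cite{Praly_existence_KKL_observer} and is used as a black box for the rest of the construction, so there is no in-paper proof to compare your proposal against.

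That said, your sketch is a faithful outline of the classical Andrieu--Praly argument and would be an appropriate summary of the cited reference. The integral formula $\T(x)=\int_{-\infty}^{0}e^{-D\tau}F h(\phi(\tau;x))\,d\tau$ after a smooth global extension of $f$, the verification of \eqref{eq:pdeT} by differentiating along the flow, the error computation $\dot e = De$ leading to \eqref{eq:expBound} with $M=\norm{P}\,\norm{P^{-1}}$, and the final convergence via uniform continuity of $\Tinv$ are all correct and standard. You are also right that the only genuinely hard step is injectivity on $\X$, and your description of it---projecting onto eigencomponents to obtain vanishing one-sided Laplace transforms of a nonzero signal, then invoking analyticity in the spectral parameters to conclude that the exceptional set $S$ has zero measure---captures the essence of the argument in~\cite{Praly_existence_KKL_observer}. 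One small caution: the uniformity over the compact $\X$ (i.e., that a \emph{single} zero-measure $S$ works for all pairs $(x_a,x_b)$) requires more than the pairwise Laplace argument you sketch; in the original proof this is handled through a Coron-type genericity lemma rather than a direct union over pairs, so if you intend to write out details that is where to be careful.
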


Note that according to this result, $z \in \Complex^{d_y(d_x+1)}$. Therefore, in order to represent this filter with real numbers only, we need $d_z = 2 d_y(d_x+1)$. However, in practice we assume that the $d_y(d_x+1)$ complex eigenvalues needed for $D$ are complex conjugates, such that we only need dimension $d_z = d_y(d_x+1)$ to represent the real filter $z \in \R^{d_z}$.

Thus, implementing a KKL observer involves following the steps:
\begin{enumerate}
	\item Choose matrices $D$ and $F$
	\item Compute the corresponding transformation $\Tinv$
	\item Simulate \eqref{eq:dynz} from an arbitrary $z(0)$ and compute the estimate $\hat x(t) = \Tinv(z(t))$.
\end{enumerate}

In \citet{Ramos2020}, a method to complete step 2 by performing nonlinear regression on trajectories of~\eqref{eq:main_dynamics} and~\eqref{eq:dynz} is proposed. In the next section, we propose an approach to assist the user in completing step 1 by defining a performance criterion to optimize. 
\section{A gain tuning criterion}
\label{sec:gain_tuning_criterion}

Consider the dynamical system~\eqref{eq:main_dynamics} and associated observer dynamics~\eqref{eq:dynz}. 
Denote $x$,~$z$ their solutions starting respectively at $x(0)$ and~$\T(x(0))$.
Assume now that the measurement $y$ is corrupted by an unknown noise vector $\noise \in \R^{d_y}$, so that $y(t) = h(x(t)) + \noise(t)$.
Denote $\hat{z}$ the corresponding solution of~\eqref{eq:dynz} starting at an arbitrary initial condition $z_0$, and $\tilde{z} = \hat{z} - z$ the estimation error due to both the initial error and the measurement noise. 
In general, we aim to choose $D$ such that the overall error on the estimated state $\hat{x}$ is minimized, where $\hat{x} = \Tinv(\hat{z})$, similarly to~\citet{Henwood_PhD_estimation_online_parametres_machines_electriques_suivi_temperature_composants}. The following result provides a criterion for tuning $D$, which we then apply to the approximated transformation.

\begin{proposition}
\label{prop:boundHatx}
Suppose Assumptions~\ref{ass_bounded} and~\ref{ass_detect} are verified, such that Theorem~\ref{theo_auto} holds. Further, assume that $\Tinv$ is Lipschitz continuous of constant $L$.
Then, we have
\begin{align}
    \normLtwo{ \hat{x} - x } \leq L \Big( \norminfty{G_\noise} \normLtwo{\noise} + \normHtwo{G_z} \normltwo{\tilde{z}(0)} \Big) 
    \label{eq:boundHatx}
\end{align}
where $\normltwo{\cdot}$ is the Euclidean norm, $\normLtwo{\cdot}$ is the $L^2$ norm, and the $H^2$ respectively $H_\infty$ norms are defined as 
\begin{align}
	\normHtwo{G}^2 = \frac{1}{2\pi} \int_{-\infty}^{\infty} \normltwo{G(j\omega)}^2d\omega , \quad \norminfty{G} = \sup_\omega \abs{G(j \omega)}
\end{align}
with $G_\noise(s) = (s I_{d_z} - D)^{-1} F$ the transfer function from $\noise$ to $\tilde{z}$, and $G_z(s) = (s I_{d_z} - D)^{-1}$ from $\tilde{z}(0)$ to $\tilde{z}$.
\end{proposition}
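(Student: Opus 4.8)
The plan is to reduce the nonlinear estimation error to the linear observer error $\tilde z$ through the Lipschitz property, and then to split $\tilde z$ by superposition into the response to the initial error and the response to the noise, bounding each contribution by the appropriate linear-systems norm. First I would establish that the noise-free solution $z$ coincides with $\T(x)$: differentiating $t \mapsto \T(x(t))$ along \eqref{eq:main_dynamics} and substituting the PDE \eqref{eq:pdeT} shows that $\T(x(t))$ solves \eqref{eq:dynz} with $y = h(x)$; since $z(0) = \T(x(0))$, uniqueness of solutions gives $z(t) = \T(x(t))$, and hence $x(t) = \Tinv(z(t))$ on $\X$. The Lipschitz continuity of $\Tinv$ then yields, pointwise in time, $\abs{\hat x(t) - x(t)} = \abs{\Tinv(\hat z(t)) - \Tinv(z(t))} \le L\,\abs{\tilde z(t)}$, so that $\normLtwo{\hat x - x} \le L\,\normLtwo{\tilde z}$ and it only remains to bound $\normLtwo{\tilde z}$.

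Next I would derive the error dynamics. Subtracting \eqref{eq:dynz} driven by $h(x)$ from the noisy observer driven by $h(x)+\noise$ gives $\dot{\tilde z} = D\tilde z + F\noise$ with $\tilde z(0) = z_0 - \T(x(0))$. This is a stable LTI system, since $\Re\lambda_i < -\ell < 0$ makes $D$ Hurwitz and $e^{Dt}$ exponentially decaying, so all signals below belong to $L^2$. By superposition its solution splits as $\tilde z = \tilde z_{\mathrm{ic}} + \tilde z_{\mathrm{n}}$, where $\tilde z_{\mathrm{ic}}(t) = e^{Dt}\tilde z(0)$ is the free response to the initial error and $\tilde z_{\mathrm{n}}(t) = \int_0^t e^{D(t-s)}F\noise(s)\,ds$ is the forced response to the noise, and the triangle inequality gives $\normLtwo{\tilde z} \le \normLtwo{\tilde z_{\mathrm{ic}}} + \normLtwo{\tilde z_{\mathrm{n}}}$.

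For the forced response I would invoke the standard fact that the $L^2 \to L^2$ induced gain of a stable LTI system equals the $H_\infty$ norm of its transfer function: in the frequency domain $\widehat{\tilde z_{\mathrm{n}}}(j\omega) = G_\noise(j\omega)\,\hat\noise(j\omega)$, so $\abs{\widehat{\tilde z_{\mathrm{n}}}(j\omega)} \le \norminfty{G_\noise}\,\abs{\hat\noise(j\omega)}$ pointwise, and Parseval's theorem gives $\normLtwo{\tilde z_{\mathrm{n}}} \le \norminfty{G_\noise}\,\normLtwo{\noise}$. For the free response I would use that $e^{Dt}$ is exactly the impulse response of $G_z(s) = (sI_{d_z}-D)^{-1}$, so that $\normHtwo{G_z}^2 = \int_0^\infty \|e^{Dt}\|_F^2\,dt$ by Parseval, with $\|\cdot\|_F$ the Frobenius norm; bounding $\abs{e^{Dt}\tilde z(0)} \le \|e^{Dt}\|_2\,\normltwo{\tilde z(0)} \le \|e^{Dt}\|_F\,\normltwo{\tilde z(0)}$ and integrating over $t$ gives $\normLtwo{\tilde z_{\mathrm{ic}}} \le \normHtwo{G_z}\,\normltwo{\tilde z(0)}$. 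Assembling the three inequalities produces exactly \eqref{eq:boundHatx}.

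The step I expect to be the main obstacle is the correct handling of these two frequency-domain estimates, and in particular the $H^2$ bound on the free response, where one must pass from the spectral norm of $e^{Dt}$ to the Frobenius norm that actually defines $\normHtwo{G_z}$, and verify that every signal genuinely belongs to $L^2$, which hinges on $D$ being Hurwitz. By comparison, the identity $z = \T(x)$, the Lipschitz reduction, and the error-dynamics derivation are routine.
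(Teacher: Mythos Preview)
Your proposal is correct and follows essentially the same route as the paper: reduce to $\normLtwo{\tilde z}$ via Lipschitz continuity, then decompose $\tilde z$ into its forced and free parts and bound them by $\norminfty{G_\noise}\normLtwo{\noise}$ and $\normHtwo{G_z}\normltwo{\tilde z(0)}$ respectively. The paper carries out the decomposition via the Laplace transform and then simply invokes ``standard results on signal norms for linear systems'' for both bounds, whereas you work in the time domain (superposition of $e^{Dt}\tilde z(0)$ and the convolution with $e^{Dt}F$) and spell out the Parseval/Frobenius argument for the $H^2$ bound and the $H_\infty$ gain argument for the noise term; you also explicitly verify $z(t)=\T(x(t))$, which the paper uses silently when writing $x=\Tinv(z)$. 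These are expository differences only, not a genuinely different strategy.
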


\begin{pf}
By Lipschitz continuity of $\Tinv$, we have
\begin{align}
    \normLtwo{\hat{x} - x} ^2 & =\int_0^\infty \normltwo{ \Tinv(\hat{z}(t)) - \Tinv(z(t)) }^2 dt
    \notag
    \\
	& \leq L ^2 \normLtwo{\tilde{z}}^2 .
	\label{eq:bound_hatx_lipsch}
\end{align}
The Laplace transform applied to the dynamics of $\tilde{z}$ yields
\begin{align}
    \underline{\tilde{z}}(s) & = (s I_{d_z} - D)^{-1} F \underline{\noise}(s) + (s I_{d_z} - D)^{-1} \tilde{z}(0)
    \notag
    \\
    & = G_\noise(s) \underline{\noise}(s) + G_z(s) \tilde{z}(0) ,
\end{align}
where we denote the Laplace transform of a signal $f(t)$ by $\underline{f}(s)$. Applying standard results on signal norms for linear systems \citep{Toivonen_signal_system_norms} yields
\begin{align}
    \normLtwo{\tilde{z}} & = \normLtwo{\underline{\tilde{z}}} \leq \norminfty{G_\noise} \normLtwo{\noise} + \normHtwo{G_z} \normltwo{\tilde{z}(0)} .
    \label{eq:bound_ztilde}
\end{align}
Replacing \eqref{eq:bound_ztilde} in \eqref{eq:bound_hatx_lipsch} concludes the proof.
\end{pf}

Proposition~\ref{prop:boundHatx} exhibits a standard trade-off in linear system theory, between sensitivity to noise through the term in $\normLtwo{\noise}$ and convergence speed through the term in $\normltwo{\tilde{z}(0)}$. 
In this paper, we propose a heuristic that guides the choice of $D$ such that the error on the estimate $\hat{x}$ is minimized.

\begin{remark}
    Proposition \ref{prop:boundHatx} relies on the assumption that $\Tinv$ is Lipschitz continuous. This is not true in general; however, we approximate $\Tinv$ with the neural network model $\Tinvparam$, which is Lipschitz if its activation function is and if its weights are bounded \citep{Scaman_Lipschitz_deepNN_AutoLip}. Its Lipschitz constant can be approximated empirically, for example by computing its maximum over a regular grid of $n$ samples $z_j$. However, the maximum value is subject to outliers and tends to vary strongly between models.
\end{remark}

In the light of this remark, we propose to monitor the following empirical criterion
\begin{align}
\label{eq:gain_tuning_criterion}
	\alpha(D) & := \normltwo{J} \big( \norminfty{G_\noise} + \normHtwo{G_z} \big) 
	\notag
	\\ 
	J & := \bigg( \normltwo{\frac{\partial \Tinv}{\partial z}  (z_j)} \bigg)_{j \in \{1,\cdots, n\}} 
\end{align}
where we consider the $l_2$-norm of $J$ rather than its infinity norm.
This is an approximate bound for $\normLtwo{\hat{x} - x}$. 
This heuristic trades off the transient through $\normHtwo{G_z}$, and the performance and noise sensitivity through $\norminfty{G_\noise}$ and $\normltwo{J}$.
In our experiments, we consider a family of matrices $D$ indexed by a scalar parameter~$\Dparam$. We then compute $\alpha$ for different $D(\Dparam)$ and pick the value of $\Dparam$ that minimizes it. 

\begin{remark}
	The bound~\eqref{eq:bound_hatx_lipsch} is conservative, and the choice of the $L^2$ norm is somewhat arbitrary. In practice, one could consider a variety of criteria by weighting different norms of~$\frac{\partial \mathcal T^*}{\partial z}$, $G_\noise$ and $G_z$. 
	For example, in the linear case where $\T$, $\Tinv$ are matrices, we have
	\begin{align}
	    \underline{\hat{x}}(s) -\underline{x}(s) & = (sI_{d_z} - \Tinv D \T)^{-1} \Tinv F \underline{\noise}(s) 
         \notag
         \\ & + (sI_{d_z} - \Tinv D \T)^{-1}(\hat{x}(0) - x(0)).
         \label{eq:linear_criterion}
	\end{align}
	Another criterion could be the $H_\infty$ norm of an analogy of this transfer function~\eqref{eq:linear_criterion} for the nonlinear case using the empirical gradients.
	Note also that there are more advanced methods to estimate the Lipschitz constant of $\Tinvparam$~\citep{Scaman_Lipschitz_deepNN_AutoLip}; we focus on the simpler criterion~\eqref{eq:gain_tuning_criterion}, which is enough to exhibit some of the trade-offs faced when tuning~$D$.
\end{remark}

In the next section, we present a method to improve the regression process by carefully generating the dataset and propose a possible parameterization of $D$, before illustrating the merits of the criterion in Sec.~\ref{sec:results}.

\section{Numerical methods}
\label{sec:numerical_methods}

As in~\citet{Ramos2020}, we approximate the transformation~$\Tinv$ by a neural network\footnote{Note that $\T$ can also be approximated using the same methodology, but is not necessary for state estimation.} of weights $\invparam$. The resulting observer is illustrated in Fig.~\ref{fig:t_inv}: we feed the measurement $y$ into the observer dynamics $\eqref{eq:dynz}$, then apply the neural network model $\Tinvparam$.
To train $\Tinvparam$, i.e. perform nonlinear regression, a dataset of $N$ pairs $(x_i,z_i)$, $i \in \{1, \dots, N\}$ needs to be generated from trajectories of~\eqref{eq:main_dynamics},~\eqref{eq:dynz}.
The construction of this dataset poses an important challenge, as the observer state~$z$ converges towards~$\T(x)$ only after a transient period whose length depends on $D$. This transient is not suitable for gathering data to learn the transformation, since we do not have $x \simeq \Tinv(z)$ during the transient.
However, for autonomous nonlinear systems, the trajectories tend to converge towards the~$\omega$--limit sets~\citep{rouche1977stability} of the dynamics, so that the points $(x_i,z_i)$ after the transient tend to be close to these~$\omega$--limit sets, leading to an uninformative dataset. We solve this problem in Sec.~\ref{subsec:backward_forward_sampling}.

Further, in order to calibrate the observer using the gain tuning criterion~\eqref{eq:gain_tuning_criterion}, a parametrization of the gain matrix $D$ by a scalar $\Dparam$ is needed.
Then, one can either learn a model $\Tinvparam$ for each value of $\Dparam$ independently, or learn the transformation as a function of $\Dparam$. This yields a harder regression problem, but avoids needing to learn a new transformation each time the pair~$(D, F)$ is changed. This is discussed in Sec.~\ref{subsec:learn_dependency_D}.

\subsection{Backward-forward sampling}
\label{subsec:backward_forward_sampling}

The choice of $(x_i, z_i)$ pairs is critical to numerically approximate $\Tinv$.
In~\citet{Ramos2020}, inspired by~\citet{Praly_uniform_practical_output_regulation}, the authors propose to first generate an arbitrary grid of initial conditions $(x(0), z(0))$ using standard statistical methods such as Latin Hypercube Sampling (LHS). 
Then, relying on the observer's stability, meaning that it forgets its arbitrary initial condition $z(0)$ after some time, the dynamics $x(t)$ and $z(t)$ are simulated forward in time for $\tcut$, where $\tcut$ is chosen large enough such that $z(\tcut)$ is \say{close} to its steady-state. 
Finally, the beginning of the solutions $(x(t),z(t))$ for $t < \tcut$ is removed from the dataset.

\begin{figure}[t]
	\begin{center}
		\includegraphics[width=\columnwidth]{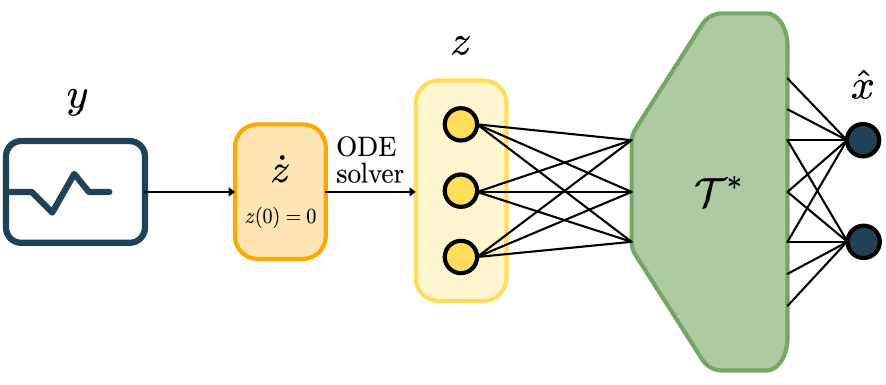} 
		\caption{Schematics of the learned KKL observer. First, we solve the ordinary differential equation~\eqref{eq:dynz} for the measurement $y$ generated from the original system~\eqref{eq:main_dynamics}. Then, the estimate $\hat{x}$ is computed as $\Tinvparam(z)$, where $\Tinvparam$ approximates $\Tinv$.}
		\label{fig:t_inv}
	\end{center}
\end{figure}

Unfortunately, this approach lets the dynamics dictate the position of the $(x_i, z_i)$ pairs: for large values of $\tcut$, they are bound to be located close to the~$\omega$--limit sets of the system~\citep{rouche1977stability}.
However, it is desirable to have training samples all over the state-space, especially in regions where the function $\Tinv$ are less smooth and therefore more difficult to approximate.

We propose the following methodology to generate an arbitrary dataset of $(x_i, z_i)$ pairs.

\begin{enumerate}
	\item Choose $N$ initial conditions $x_i(0) \in \X$, $i \in \{1, \dots, N\}$ using a uniform grid, LHS sampling, or any other method.
	
	\item Simulate the system $\dot{x} = f(x)$ from $x_i(0)$ \emph{backward in time} for $\tcut $ seconds.
	
	\item If the system diverges in backward finite time, then $f$ should be saturated smoothly outside of $\X$ as suggested in~\citet{Praly_existence_KKL_observer,Pauline_Luenberger_observers_nonautonomous_nonlin_systs}. An example of saturation is provided in Sec.~\ref{subsec:qube}.
	
	\item Simulate both systems $\dot{x} = f(x)$ and $\dot{z} = D z + F y$ with $y=h(x)$, starting from $x_i(-\tcut)$ obtained previously and $z_i(-\tcut) = z_0$, where $z_0$ is an arbitrary initial condition, for $\tcut$ seconds forward in time.
	
	\item Set the training dataset to $(x_i, z_i) = (x_i(0), z_i(0))$ as obtained from backward-forward simulation.
\end{enumerate}

With this approach, the user can set the training points $x_i$ a priori and obtain the corresponding $z_i$ without the system dynamics modifying the desired state-space grid.

\subsection{Parametrization of $D$}
\label{subsec:learn_dependency_D}

In order to evaluate the proposed gain tuning criterion, we parametrize $D$ by a scalar $\Dparam$.
Several parametrizations can be considered, for example choosing $D$ as a given diagonal matrix multiplied by a factor.
In this paper, we propose to use a $d_z$-order Bessel filter with cut-off frequency $2 \pi \omegac$, while $F = \mathbf{1}_{d_z \times d_y}$ is fixed to guarantee the controllability of $(D,F)$.
We choose $D$ by setting its eigenvalues to be the filter's poles.
For any set of poles $(p_1,\hdots, p_{n})$ where $p$ poles are real and $m$ poles are complex conjugates such that $n = p + 2m$, we choose $D$ as the following block-diagonal matrix: 
\begin{align}
    D &= \left(\begin{smallmatrix} D_1 & \cdots & 0\\\vdots&\ddots&\vdots\\ 0 & \cdots&D_{p+m} \end{smallmatrix}\right), 
    D_i = \begin{cases} p_i &\text{if }p_i \text{ is real}\\ 
                        \left(\begin{smallmatrix}
                                \Re{p_i}&\Im{p_i}\\-\Im{p_i}&\Re{p_i}
                        \end{smallmatrix}\right)&\text{ otherwise}
        \end{cases}
        \label{eq:D}
\end{align}
The choice of parametrization influences the performance of the obtained models; analyzing these different possibilities further could be an interesting topic for future work.

We can then compute the gain tuning criterion \eqref{eq:gain_tuning_criterion} for different values of $\Dparam$, by learning a model $\Tinvparam$ for each value of interest.
However, this requires training several neural networks independently for each value of $D$, which can be tedious.
Also, if the observer needs to be fine-tuned, a new model will be required.
Instead, it is also possible to treat $\omegac$ as an extra input to the network, so that the transformation to approximate is $\Tinvparam(z, w_c)$. 
This yields a harder regression problem, so that training will require more data and a careful design, but also a single model for all values of $D$.
The user can then choose an acceptable value of $D$ for the use case at hand and directly use the previous model.
Alternatively, they can train again for this specific value of $D$ to obtain a more accurate approximation for this particular choice.
This approach can be advantageous for low-dimensional problems or when the observer will be needed in different experimental conditions without re-training.

In the next section, we illustrate the relevance of criterion~\eqref{eq:gain_tuning_criterion} for choosing $D$ in numerical simulations, using the proposed sampling scheme and parametrization of $D$.

\section{Results}
\label{sec:results}

We now evaluate the proposed approach on simulations of two nonlinear systems\footnote{Note that for our empirical criterion~\eqref{eq:gain_tuning_criterion} to be meaningful, the variables should be normalized~\citep{Skogestad_multivariate_feedback_control}. In these academic examples, the variables can be considered scaled.}. 
We demonstrate that $D$ can be tuned a posteriori by optimizing a metric such as~\eqref{eq:gain_tuning_criterion}, and show that it is a relevant criterion for choosing $D$ so as to limit the noise sensitivity of the state estimate.
We learn the observer as a function of $\omegac$ for the first system, independently for different values of $\omegac$ for the second system.
Note that the model can eventually be trained again after selecting $\omegac$ to reach higher accuracy.\footnote{Code for reproducing the results is available at \url{https://github.com/Centre-automatique-et-systemes/learn_observe_KKL.git}.}

\begin{figure}[t]
	\begin{center}
		\includegraphics[width=0.85\columnwidth]{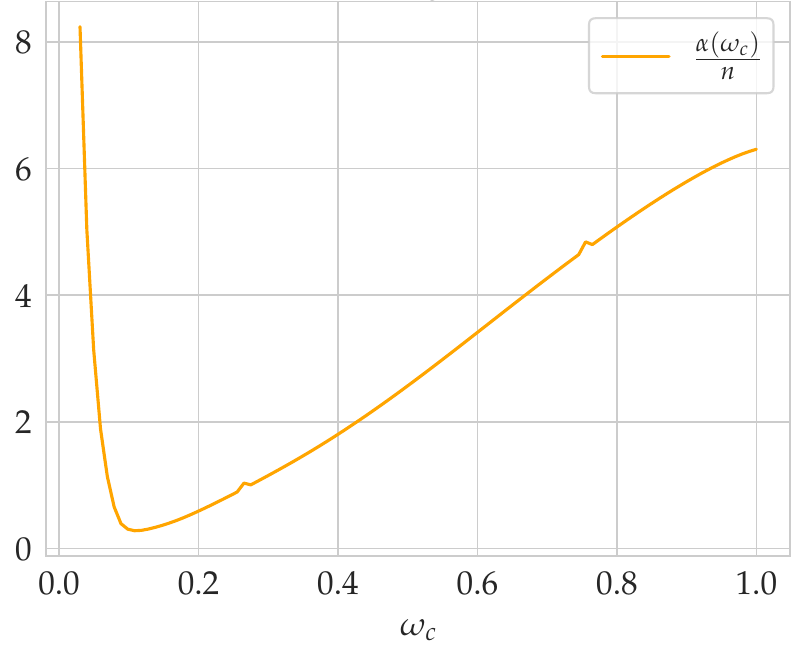} 
		\caption{Proposed gain tuning criterion \eqref{eq:gain_tuning_criterion} for the reverse Duffing oscillator, divided by $n=10,000$ points used to compute $\frac{\partial \Tinvparam}{\partial z}(z_j)$. The infinity and $H_2$ norms are high for low values of $\omegac$, while the gradient of the approximate transformation is high for high values. Choosing $\omegac = 0.15$ appears to be optimal with respect to this metric.}
		\label{fig:rev_duffing_sensitivity}
	\end{center}
\end{figure}

\subsection{Reverse Duffing oscillator}
\label{subsec:revDuffing}

\begin{figure*}[t]
\addtocounter{subfigure}{-3} 
\captionsetup[subfigure]{labelformat=empty}
	\begin{center}
		\subfloat{\includegraphics[width=0.33\textwidth]{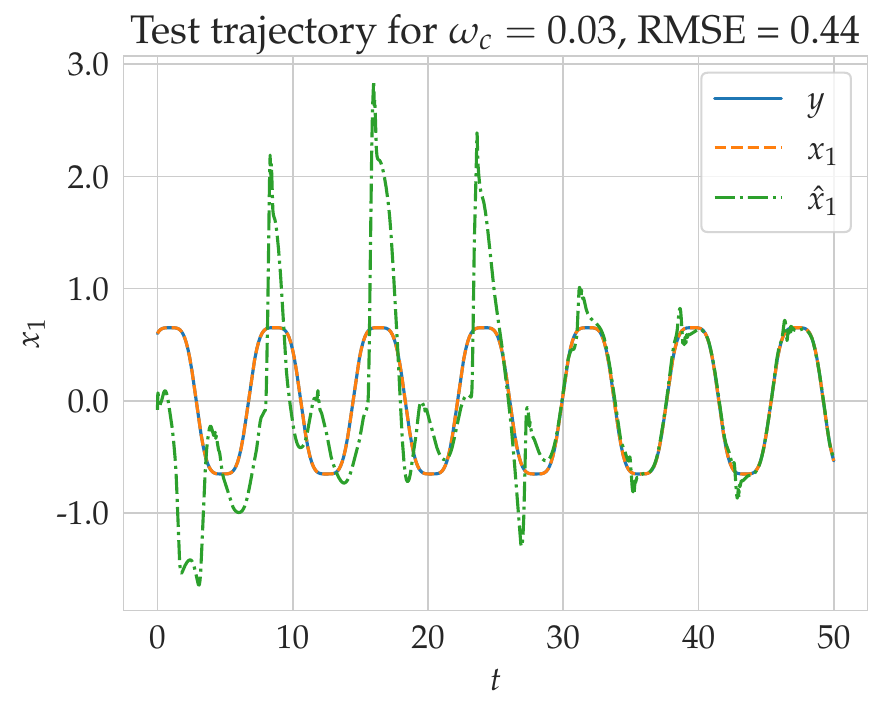}}
		\subfloat{\includegraphics[width=0.33\textwidth]{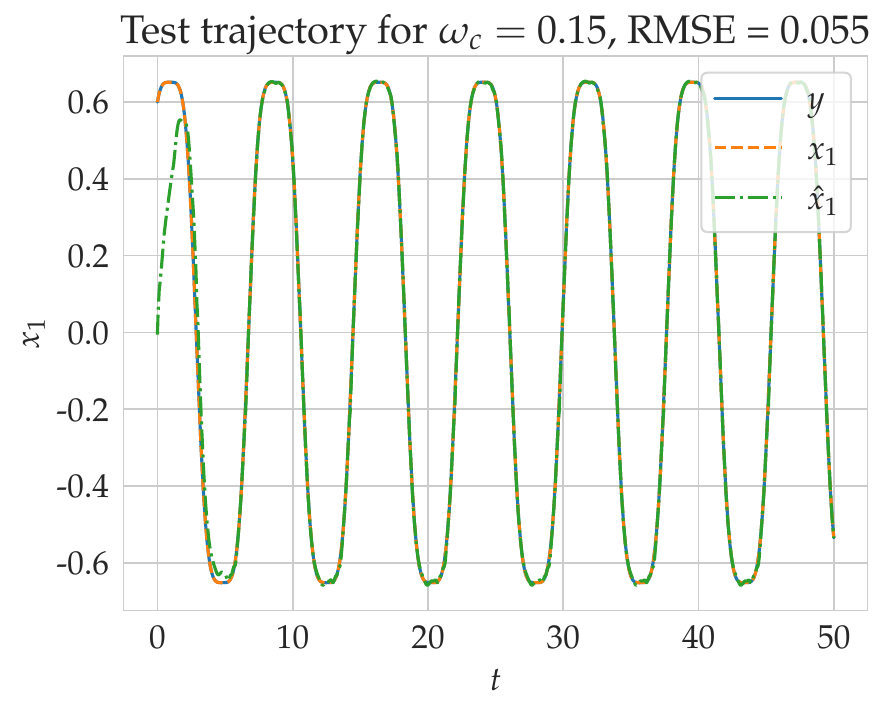}}
		\subfloat{\includegraphics[width=0.33\textwidth]{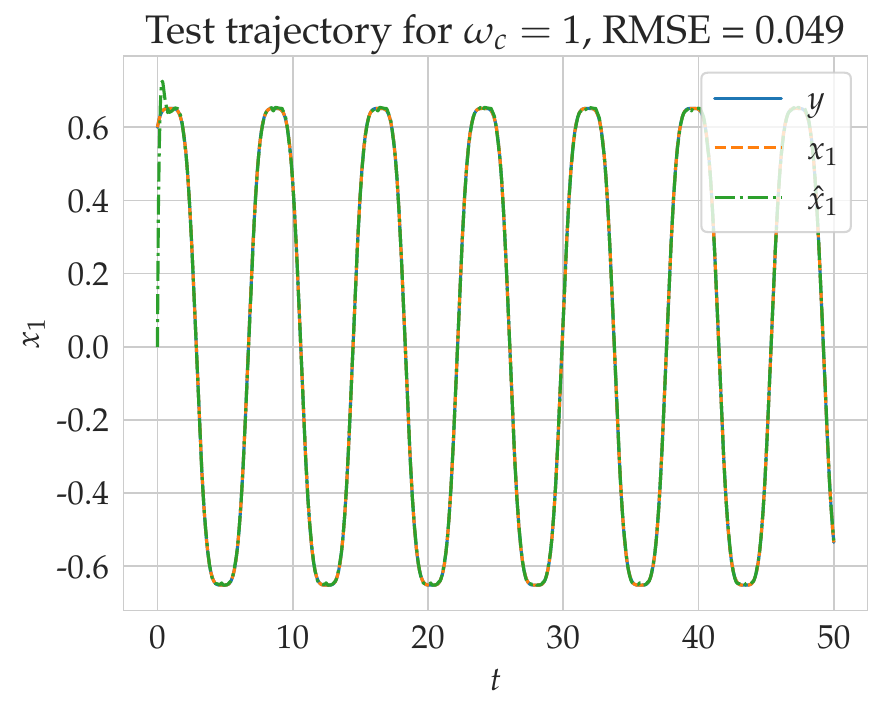}}
		\\
		\vspace{-1em}
		\subfloat[$\omegac=0.03$]{\includegraphics[width=0.33\textwidth]{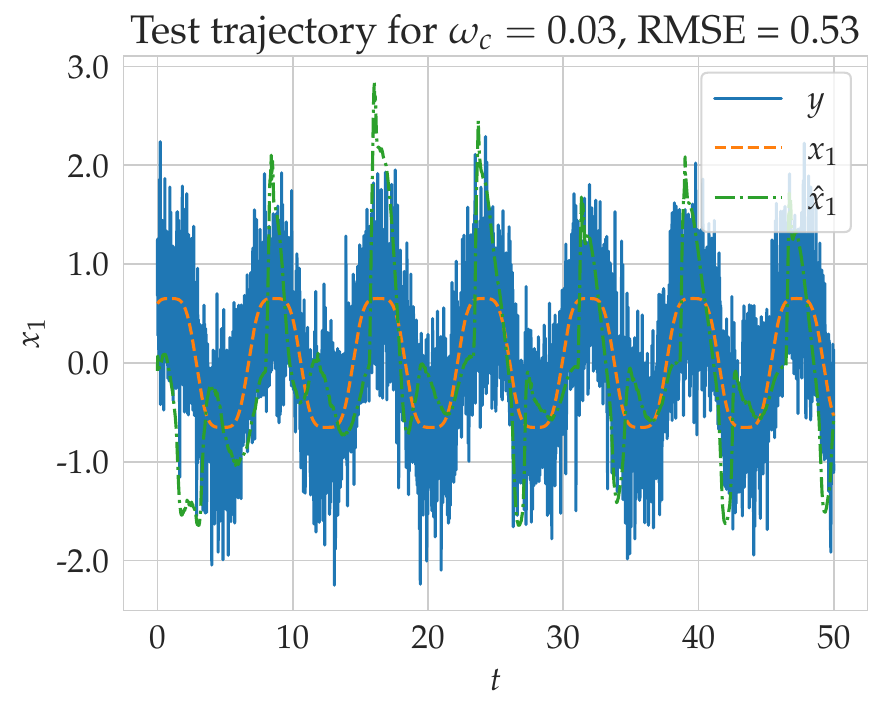}}
		\subfloat[$\omegac=0.15$]{\includegraphics[width=0.33\textwidth]{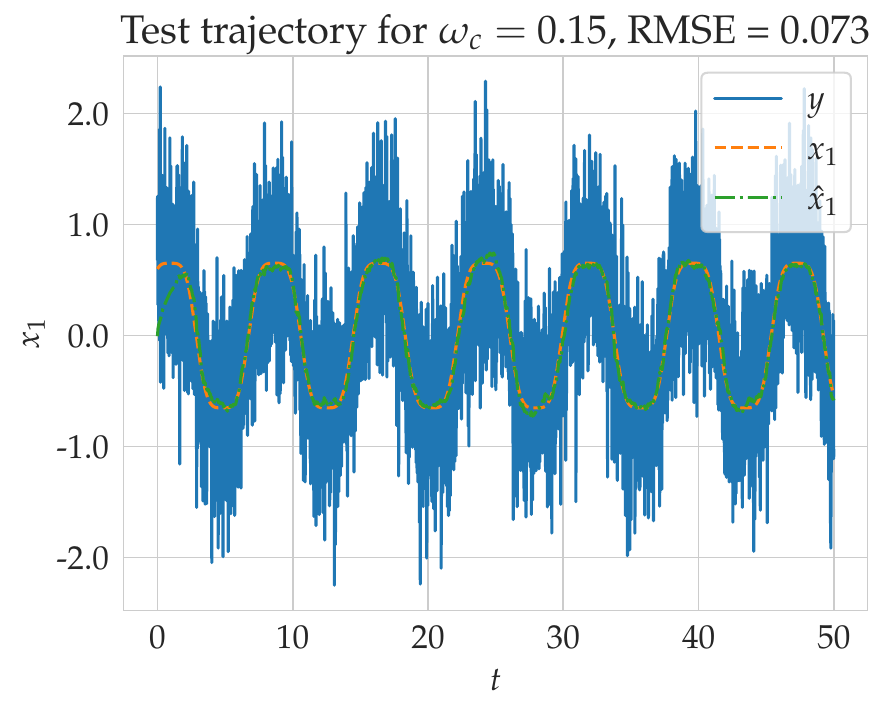}}
		\subfloat[$\omegac=1$]{\includegraphics[width=0.33\textwidth]{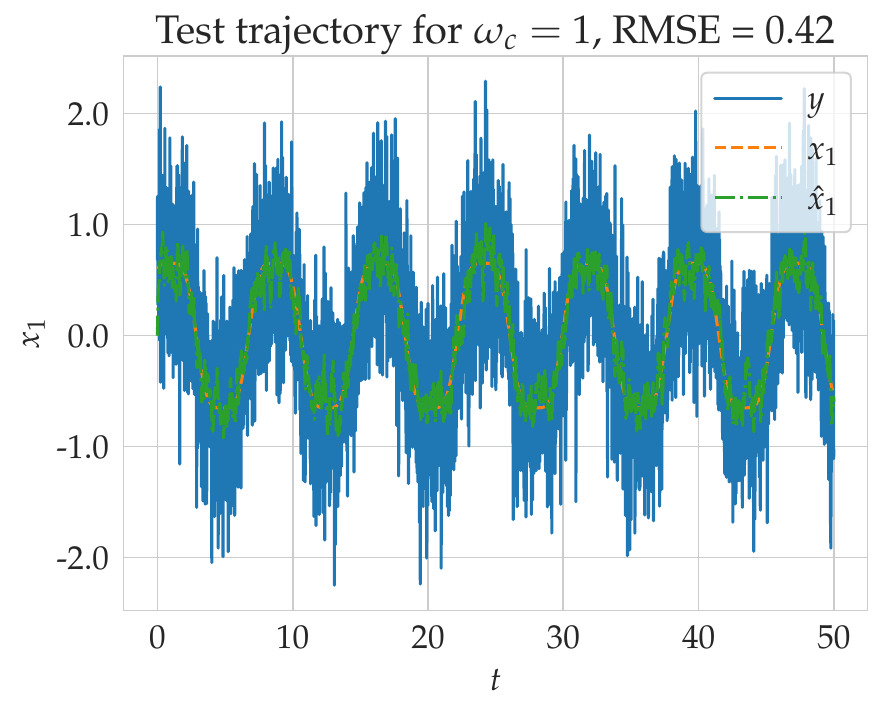}}
		\caption{Estimated trajectories of the reverse Duffing oscillator for $x(0)=(0.6,0.6)$, without measurement noise at the top, with noise $\mathcal{N}(0, 0.5)$ at the bottom. For each setting, we compute the Root Mean Squared Error (RMSE) over the whole trajectory. For low $\omegac$ (left), we observe long transients. For high $\omegac$ (right), the estimate is sensitive to high-frequency noise. For $\omegac = 0.15$ (middle), it is accurate and relatively robust to measurement noise.}
		\label{fig:rev_duffing_test_trajs}
	\end{center}
\end{figure*}

The reverse Duffing oscillator 
\begin{align}
	\label{eq:rev_duffing}
	\begin{cases}
		\dot{x}_1 = x^3_2 \\
		\dot{x}_2 = -x_1
	\end{cases} 
	\qquad
	y = x_1 
\end{align}
is a nonlinear system whose solutions evolve on invariant compact sets.
We choose a set of hundred values of~$\omega_{c_i}$ in $[0.03, 1]$.
Then, LHS is used to select $N=5,000$ samples $x_i \in [-1,1]^2$ for each value of $\omegac$.
The corresponding $z_i$ samples are computed using backward-forward sampling as described in Sec.~\ref{subsec:backward_forward_sampling}.
The training data is normalized to ease the optimization process.
For each given $\omegac$, $D$ is computed following~\eqref{eq:D}, while $F = \left(\begin{smallmatrix} 1 & 1 & 1\end{smallmatrix}\right) ^\top$.
The time $\tcut$ after which we consider that the observer has converged is set to $\frac{10}{\lambda_{min}(D)}$, where $\lambda_{min}(D)$ is the minimum absolute value of the real part of the eigenvalues of $D$, such that it is different for each value of $\omegac$.
The neural networks are multi-layer perceptrons with five hidden layers of size 50 and \emph{SiLU} activation, which is Lipschitz continuous and shows good performance.
We train $\Tinvparam$ by minimizing
\begin{align}
    \label{eq_train_loss}
    L(\invparam) & = \frac{1}{2} \sum_{x_i, z_i, \omega_{c_i}} \normltwo{ x_i - \Tinvparam(z_i, \omega_{c_i} ) }^2 .
\end{align}
We approximate $\Tinv(z, \omegac)$ over the training data as a function of $\omegac$, then compute the criterion~\eqref{eq:gain_tuning_criterion} for each value of $\omegac$ over a uniform grid of $n=10,000$ test points $z_j$, also obtained with backward-forward sampling.
The empirical criterion is shown in Fig.~\ref{fig:rev_duffing_sensitivity}.


The choice of $\omegac$ greatly influences the performance of the learned observer, as seen in Fig.~\ref{fig:rev_duffing_test_trajs}. 
In our simulations, lower values of $\omegac$ lead to a long convergence time and large overshoot, which corresponds to high values of $\normHtwo{G_z}$.
However, low $\omegac$ also yields a high signal to noise ratio in $z$, such that the observer is relatively robust to measurement noise. This is illustrated in the left column of Fig.~\ref{fig:rev_duffing_test_trajs}.
On the other hand, high values of $\omegac$ lead to a high gradient of $\Tinvparam$: the approximate transformation is not smooth and therefore very sensitive to changes in $z$, hence to measurement noise. The signal to noise ratio in $z$ is also low due to the fast eigenvalues of $D$. This is depicted at the bottom right of Fig.~\ref{fig:rev_duffing_test_trajs}. 
In the central column of Fig.~\ref{fig:rev_duffing_test_trajs}, we select $\omegac = 0.15$ the optimal value according to criterion~\eqref{eq:gain_tuning_criterion}. This setting yields an acceptable trade-off between these different aspects: both overshoot and noise sensitivity remain limited.
Hence, the proposed gain turning criterion leads to satisfying performance for this use case.

\subsection{Quanser Qube}
\label{subsec:qube}

We then consider simulations of a rotational inverted pendulum: the Qube Servo 2 by~\citet{QuanserQube}, illustrated in Fig.~\ref{fig:qube_picture}.
Its state of dimension four consists of two angles $(\theta_1, \theta_2)$ and two angular velocities $(\dot\theta_1, \dot\theta_2)$; we measure $y=\theta_1$.
Its trajectories diverge in finite backward time. Hence, as suggested in \citet{Praly_existence_KKL_observer, Pauline_Luenberger_observers_nonautonomous_nonlin_systs}, we consider the modified system 
\begin{align}
    \label{eq:van_der_pol}
    \dot{x} & = f(x) g(x) ,
    \notag
    \\
    g(x) & = \begin{cases}
        1 & \text{if } \normltwo{x} \leq r
        \\
        0  & \text{if } \normltwo{x} \geq r + d
        \\
        p(\normltwo{x}-r) & \text{otherwise}
    \end{cases}
\end{align}
where $f$ is the dynamics model of the Qube and $g$ is a saturation function. We set $r=50$ and $d=100$.
The function $p(\cdot)$ is a polynomial of order three chosen such that $g$ be $C^1$.
This modified system has the same trajectories as the original system inside $\X$ but does not blow up in backward time from any initial condition in $\X$.

Due to the curse of dimensionality, a large amount of data is necessary to learn $\Tinv$ with $d_x=4$. In order to limit the computations, we generate data along realistic trajectories for the autonomous pendulum: we select $500$ samples in a hypercube around the upward equilibrium position, use backward-forward sampling to obtain the corresponding $z$ values, then run a joint simulation of both the $x$ and $z$ trajectories for $8$s, sampled with time steps of $0.04$s. This leads to $N=10^6$ points for each of $41$ values of $\omegac \in [1,5]$, for which we learn one model $\Tinvparam$ independently.

We then compute the empirical criterion~\eqref{eq:gain_tuning_criterion} for each value of $\omegac$ independently on a grid of $n=50,000$ points and obtain Fig.~\ref{fig:qube_sensitivity}. The minimum is reached at $\omegac=1.9$, which again seems to be a good compromise between long transients and sensitivity to measurement noise. This is illustrated in Fig.~\ref{fig:qube_test_traj_error}: high values of $\omegac$ lead to sensitivity to high frequency measurement noise, low values to long transients whenever the estimate is off, and $\omegac=1.9$ to an acceptable trade-off.

These results constitute a first step towards gain tuning for nonlinear observers. They can be considered as a proof of concept, showing that it is possible to tune the gains of KKL observers by parametrizing the gain matrix with a scalar $\omegac$ then optimizing this scalar w.r.t. certain metrics. Note that many such metrics could be considered depending on the use case at hand. We propose the gain tuning criterion~\eqref{eq:gain_tuning_criterion}, which displays relevant aspects of the trade-off faced when choosing $D$ as illustrated by our results, but other quantities could also be helpful.

\begin{figure}[t]
	\begin{center}
		\includegraphics[width=0.3\columnwidth]{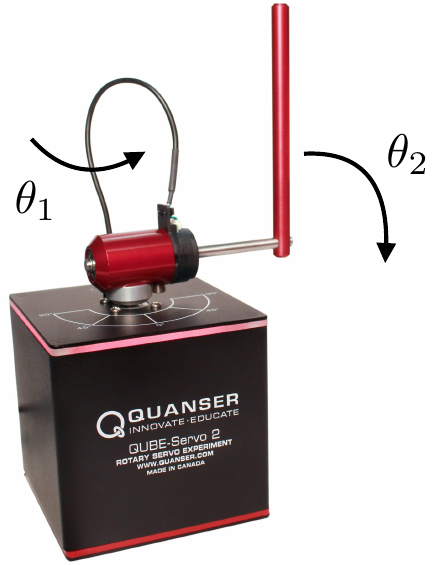} 
		\caption{Qube Servo 2 by~\citet{QuanserQube}.}
		\label{fig:qube_picture}
	\end{center}
\end{figure}

\begin{figure}[t]
	\begin{center}
		\includegraphics[width=0.9\columnwidth]{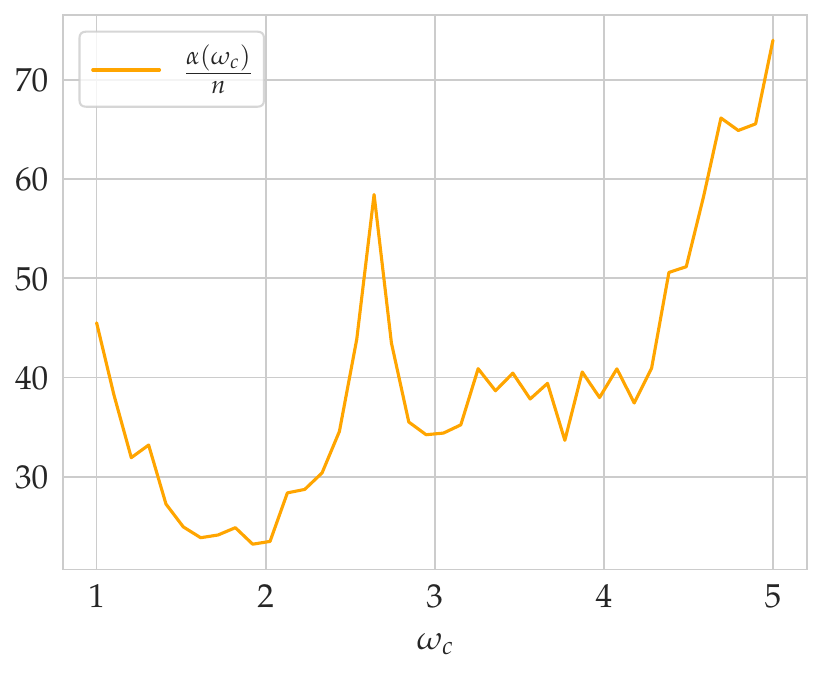} 
		\caption{Proposed gain tuning criterion~\eqref{eq:gain_tuning_criterion} for the Qube, divided by $n=10,000$. Choosing $\omegac = 1.9$ appears to be optimal with respect to this metric.}
		\label{fig:qube_sensitivity}
	\end{center}
\end{figure}

\begin{figure}[t]
	\begin{center}
		\includegraphics[width=\columnwidth]{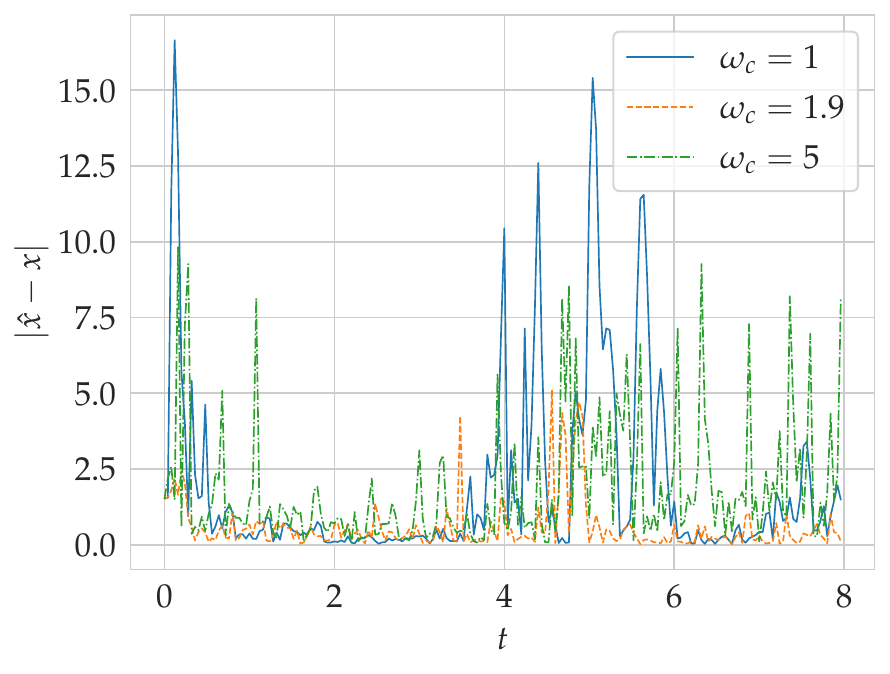} 
		\caption{Estimation RMSE $\normltwo{\hat{x}-x}$ for a simulated test trajectory of the Qube starting at $x(0)=(0.1,0.1, 0,0)$, with Gaussian noise $\mathcal{N}(0, 0.025)$ on the measurement.
		We observe a high sensitivity to noise for high values of $\omegac$ (RMSE $= 2.8$ for $\omegac=5$). In contrast, the sensitivity to noise is lower for low values of $\omegac$, but we observe long transients, which lead to RMSE $= 3.7$ for $\omegac=1$. The value $\omegac=1.9$ compromises between performance and sensitivity to noise (RMSE $= 1$).}
		\label{fig:qube_test_traj_error}
	\end{center}
\end{figure}

\subsection{Discussion on the numerical results}

Independently of the chosen parametrization of $D$ or the strategy for generating the training data, approximating $\Tinv$ with neural networks bears the risk of overfitting. This risk is limited by monitoring the training loss~\eqref{eq_train_loss} on a validation set and stopping the training early if this validation loss starts increasing, along with other standard techniques to restrict overfitting in supervised learning. For the Quanser Qube in Sec.~\ref{subsec:qube}, due to the difference between the simulation model used for generating the training data and the hardware, the performance of the learned observer decreases when used on experimental rather than simulated test trajectories. This illustrates that overfitting $\Tinvparam$ to the particular system at hand, i.e. the robustness of the numerical KKL observer to model error, remains an issue. Recent works propose performance improvements of such observers~\citep{Johansson_learning_based_KKL, Miao_learning_robust_state_observers_NODEs}, which could help alleviate this. Fine-tuning the learned observer based on measurements could also help adapt it to the physical system at hand, for instance by retraining it inside an output predictor as in~\citet{Andrieu_deepKKL_output_prediction_nonlin_systs}.

\section{Conclusion and perspectives}
\label{sec:conclu}

In this paper, we tackle the problem of gain tuning for KKL observers of autonomous nonlinear systems. We propose to numerically approximate the observer from simulation data, as introduced in~\citet{Ramos2020}, with an improved backward-forward sampling scheme. 
We parametrize the observer dynamics matrix $D$ with a scalar $\omega_c$, derive an empirical criterion for tuning it, and demonstrate on two numerical examples that it encompasses some relevant aspects of its influence on the performance. 
We propose either to learn an observer for each value of $\omega_c$ of interest, or to directly learn a family of models that also takes this parameter as an input.


Similarly to~\citet{Nadri_autoencoder_KKL_discrete_time,Lusch_deeplearning_universal_linear_embeddings_nonlin_dyns}, it is also possible to learn a model of $\T$ and $\Tinv$ jointly using an autoencoder structure, such that the latent variable $z$ verifies~\eqref{eq:dynz}. The cost function is then made up of a reconstruction loss and a loss on the PDE~\eqref{eq:pdeT} verified by $\T$, such that an invertible solution to~\eqref{eq:pdeT} is approximated on a grid of samples of $x$. 
This approach enables the user to optimize $D$ jointly with the models $\Tparam$, $\Tinvparam$ and to add terms to the cost functions to penalize other aspects, such as the criterion~\eqref{eq:gain_tuning_criterion}. However, it is also harder to train than the supervised approach.
Further research aims at improving the accuracy of learning-based KKL observers, such as~\cite{Johansson_learning_based_KKL}.

Many other questions remain open. As often in machine learning, it is unclear how to sample the state-space to generate the dataset optimally. Iterative active learning procedures can be envisioned, for example by learning the observer, then resampling in the parts of the state-space with the highest error, and learning again until the desired accuracy is achieved everywhere. Selecting the state-space grid a priori to achieve a given accuracy on the transformations could also be considered, as investigated in~\citet{Praly_uniform_practical_output_regulation}.
Extending KKL observers to nonautonomous systems is investigated in~\citet{Pauline_Luenberger_observers_nonautonomous_nonlin_systs}; adapting the learning-based methodology to such systems is also a topic for future research.

\begin{ack}
The authors would like to thank Sebastian Giedyk at the Institute for Data Science in Mechanical Engineering (RWTH Aachen University) for his help with the experimental data.
Thanks also to Pauline Bernard, Philippe Martin and Laurent Praly for all the fruitful discussions around this paper.
\end{ack}

\bibliography{library}             

\end{document}